\let\OLDthebibliography\thebibliography
\renewcommand\thebibliography[1]{
  \OLDthebibliography{#1}
  \setlength{\parskip}{0pt}
  \setlength{\itemsep}{0pt plus 0.3ex}
}
\newcommand\numberthis{\addtocounter{equation}{1}\tag{\theequation}}
\newcommand\blfootnote[1]{%
  \begingroup
  \renewcommand\thefootnote{}\footnote{#1}%
  \addtocounter{footnote}{-1}%
  \endgroup
}
\def\nb0{{\mathbf{0}}}
\def\nb1{{\mathbf{1}}}
\newtheorem{lemma}{Lemma}
\newtheorem{thm}{Theorem}
\newtheorem{remark}{Remark}
\newcommand{\subparagraph}{} 
\begin{document}

\title{Age of Information with On-Off Service\vspace{-.35em}}
	\author{Ashirwad Sinha, Praful D. Mankar, Nikolaos Pappas, Harpreet S. Dhillon \vspace{-.4em}}
	\maketitle
	\thispagestyle{empty}
\pagestyle{empty}
\begin{abstract}
This paper considers a communication system where a source sends time-sensitive information to its destination. We assume that both arrival and service processes of the messages are memoryless and the source has a single server with no buffer. Besides, we consider that the service is interrupted by an independent random process, which we model using the {\rm On-Off} process. For this setup, we study the age of information for two queueing disciplines: 1) {\em non-preemptive}, where the messages arriving while the server is occupied are discarded, and 2) {\em preemptive}, where the in-service messages are replaced with newly arriving messages in the {\rm Off} states. For these disciplines, we derive closed-form expressions for the mean peak age and mean age. 
\end{abstract}
\begin{IEEEkeywords}
Age of information, Peak age, {\rm On-Off} process, preemptive discipline, non-preemptive discipline. 
\end{IEEEkeywords} 
\IEEEpeerreviewmaketitle 
\blfootnote{ A. Sinha and P. D. Mankar  are with SPCRC, IIIT Hyderabad, India (Email: ashirwad.sinha@students.iiit.ac.in,   praful.mankar@iiit.ac.in). N. Pappas is with the Link\"{o}ping University, Sweden (Email:  nikolaos.pappas@liu.se). H. S. Dhillon is with Wireless@VT, Department of ECE, Virginia Tech, Blacksburg, VA (Email:  hdhillon@vt.edu). The work of H. S. Dhillon was supported by US NSF under grant CNS-1814477.
	}\vspace{-5mm}
\section{Introduction}\vspace{-1mm}
The sixth generation (6G) of mobile communication system is envisioned  to support diverse use cases requiring massive machine-type communication (MMTC) and/or ultra-reliable low-latency communication  (URLLC) \cite{Jiang,popovski2022perspective}. Many of  these use cases will include remote monitoring and/or actuation where the timeliness of the  information may be crucial. 
For example, the sensors in MMTC may transmit time-sensitive updates, such as obstacle detection in autonomous car driving or fault detection in production chain, to a central processing unit. In such cases,  maintaining freshness of updates received at the central unit is critical. 
For this, a recently introduced metric, called {\em age of information} (AoI), is useful for measuring the freshness of information received at the destination \cite{SanjitKaul_2012}. Because of its analytical tractability, the mean AoI has emerged as a key performance indicator for the real-time  MMTC \cite{Jiang}. Besides,  the distribution of AoI is also useful for characterizing  the performance of URLLC \cite{Abdel}.
However, in several scenarios, the update service process gets interrupted, causing the undesired increase in the age of updates observed at the destination. Such scenarios, where the service toggles between {\em On and Off states}, include 1)  a mobile user going in and out of outage, 2) resources sharing/scheduling, and  3) energy harvesting communication. The AoI under service interruptions caused by an external process has not received much attention yet, which is the main theme of this paper.

{\em Related works:} The authors of \cite{SanjitKaul_2012} introduced the AoI metric and derived its average for {\rm M/M/1}, {\rm M/D/1}, and {\rm D/M/1} queues under the first-come-first-serve (FCFS) discipline. Subsequently, the same authors derived average age for {\rm M/M/1} queue in \cite{Sanjit_2012a} with two types of last-come-first-serve (LCFS) disciplines: 1) LCFS without preemption, where a new arriving update replaces the stale update in queue, and 2) LCFS with preemption, where a new arriving update replaces the in-service update. 
In a majority of works so far, the consideration of memoryless inter-arrival and service times processes act as the main facilitator.
Besides, the average age has also been analyzed for a general arrival/service process. For example, \cite{Soljanin_2017} derived average age for {\rm M/G/1/1} queue with HARQ, \cite{Rajat} derived the mean age and mean peak age of {\rm G/G/1}, {\rm M/G/1}, and {\rm G/M/1} queues with FCFS and LCFS in-service preemption disciplines, and  \cite{Ulukus_2021} derived the mean age and its bounds for {\rm G/G/1/1} queue with and without preemption. 

Some recent works have  started focusing on the distribution of age. The authors of \cite{Champati_2018} studied the  age distribution for {\rm D/G/1} queue with FCFS discipline, whereas \cite{Yoshiaki_2019} derived a general formula for the stationary distribution of age that is applicable for a wide class of update systems and demonstrated its use for various queues with FCFS and preemptive/non-preemptive LCFS policies. On similar lines, \cite{Nikos_2021} derived a general formula for the age distribution under ergodic settings, which is then used to derive the generating function of age for discrete systems. The authors of \cite{Mohamed_TIT} derived the MGF of age for {\rm M/M/1} queue with and without preemption. {Besides, a new approach based on the idea of stochastic hybrid system is developed for characterizing the distributional properties of age in \cite{Roy_SHS,Mohamed_SHS}.}
Besides, significant work exists on the  age  characterization for a variety of system settings, such as mutiple source system \cite{Yates_2019}, HARQ based update systems \cite{Soljanin_2017}, energy harvesting based update system with a single source \cite{Mohamed_TIT} and with multiple sources \cite{Roy_SHS,Mohamed_SHS}, 
FCFS queues in tandem \cite{kamage}, and {\rm M/M/1/2} system with random packet deadlines \cite{Sastry_2018a},   {\rm M/G/1} queue with vacation server \cite{Natarajan}, etc. 
However, the AoI under externally interrupted service process remains unexplored, despite its relevance in many situations as aforementioned.  

{\em Contributions:} This paper considers a modified {\rm M/M/1/1} system whose server toggles between on and off states according to an independent random process that models the  service interruptions. We model these interruptions using an {\rm On-Off} process such that the {\rm On} and {\rm Off} states duration are independent and exponentially distributed.
For this setup, we derive the mean age and mean peak age for two disciplines: 1) non-preemptive, where the updates arriving when server is busy are dropped, and 2) preemptive, where the in-service updates are replaced with new arrivals during {\rm Off} states. For the limiting cases of these {\rm On} and {\rm Off} state parameters, the derived expressions for the mean of age and peak age  approach to their mean values for the $M/M/1/1$ queue with continuous service. 
\begin{figure}[b!]
\centering
\includegraphics[trim={3cm 2.5cm 10.5cm 7.7cm},clip,width=.35\textwidth]{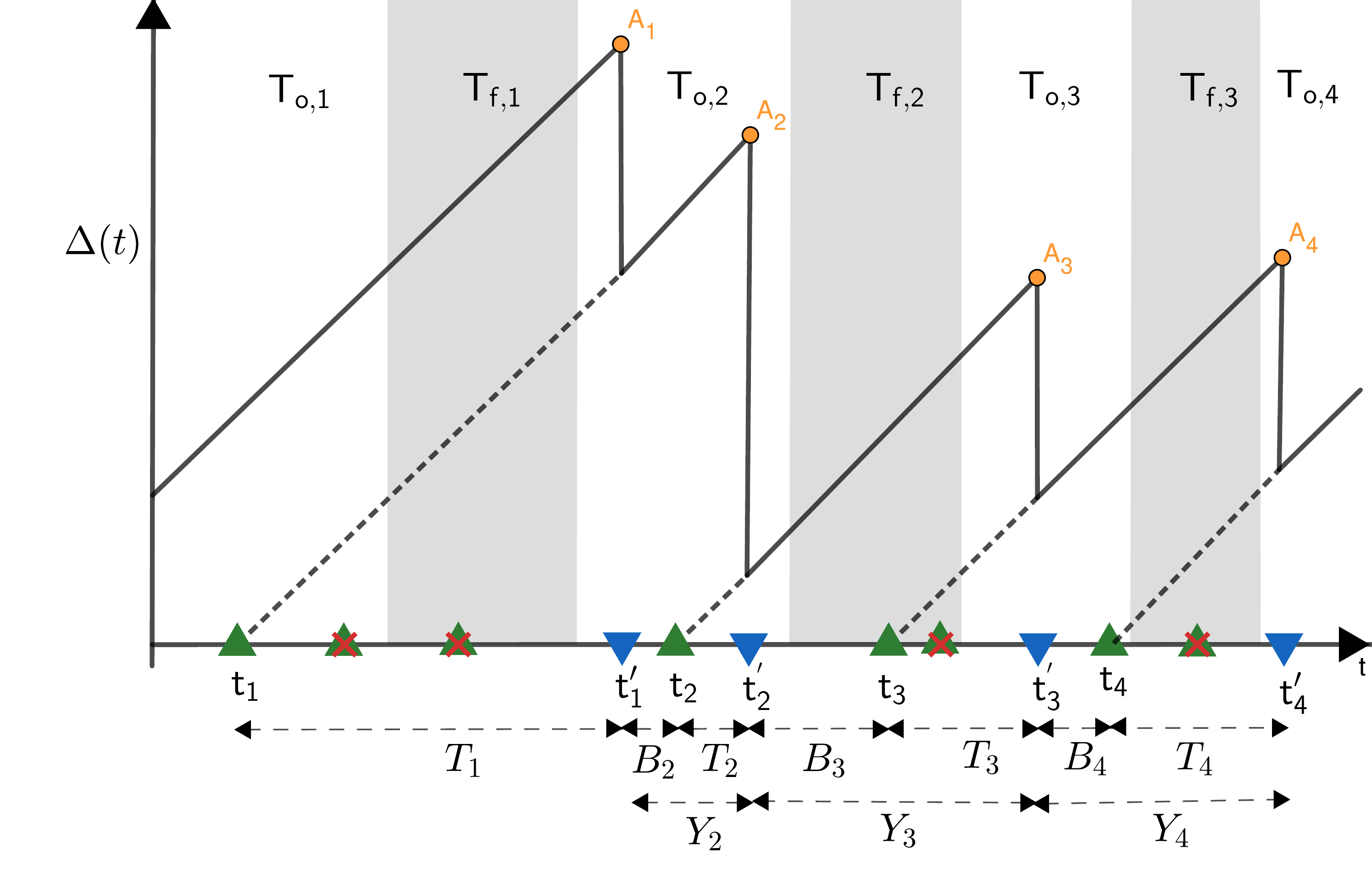} \\[2mm]\vspace{-3mm}
\includegraphics[trim={3cm 2.5cm 0.25cm 7.6cm},clip,width=.35\textwidth]{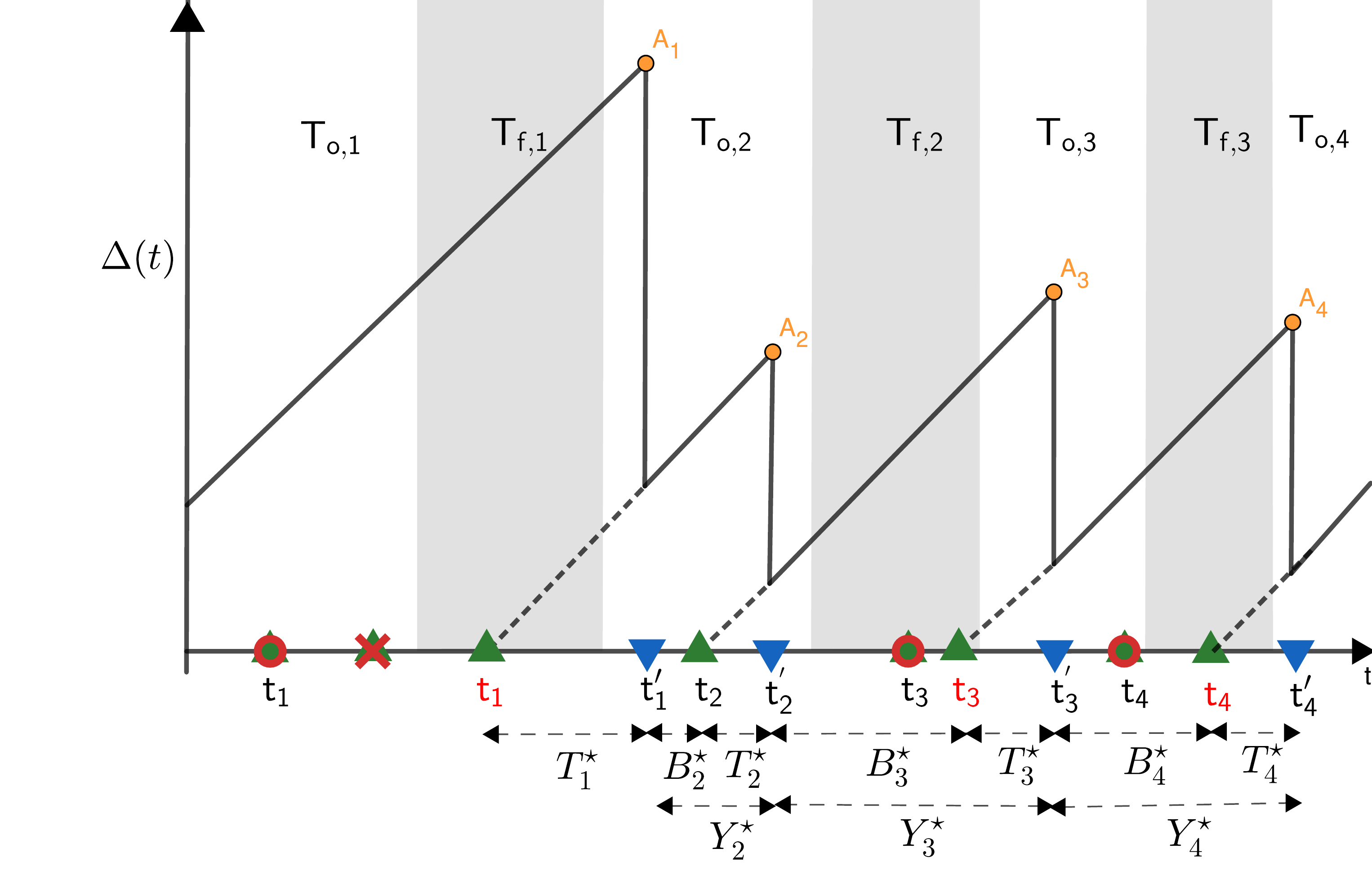} \vspace{-.3cm}
    \caption{Typical sample path of age $\Delta(t)$ for non-preemption (top) and preemption (bottom) policies. The green up and blue down arrow markers on $t$-axis indicate update arrivals and departures, respectively. The red cross and circle markers on $t$-axis show discarded and preempted updates, respectively.}\vspace{-3mm}
    \label{fig:age_sample_path}
\end{figure} \vspace{-4mm}
\section{System Model}\vspace{-2mm}
We consider a communication system where a source sends time sensitive updates about some physical process to its destination. It is assumed that the source has a single server with no packet storing facility and both the arrival and service processes follow exponential inter-arrival and service times with rates $\lambda$ and $\mu$, respectively. Further, we assume that the service process is interrupted by an {\rm On-Off} process, wherein the server operates normally during the {\rm On} states and stays idle, which we term as off, during the {\rm Off} states. The {\rm On} and {\rm Off} times are also assumed to be exponentially distributed with parameter $\kappa_{\rm o}$ and $\kappa_{\rm f}$, respectively. 
We consider two queueing disciplines: 1) {\em non-preemptive}, where the updates arriving while
the server is occupied are discarded, and 2) {\em preemptive}, where the in-service updates are replaced with newly arriving messages, if any, in the Off states. 
For such systems, we aim to analyze the AoI which is defined as $\Delta(t)=t-U(t)$, where $U(t)$ is the generation instance  of the most fresh update received by the destination. Fig. \ref{fig:age_sample_path} shows  sample paths of the age $\Delta(t)$ for non-preemptive and preemptive disciplines where $t_k$ and $t_k^\prime$ denote the arrival and departure instances of $k$-th delivered update. The service time of the $k$-th update is denoted as $T_k=t_k^\prime - t_k$, and the $i$-th {\rm On} and {\rm Off} states' periods are denoted as $T_{{\rm o},i}$ and $T_{{\rm f},i}$, respectively. Let $Y_k=t_k^\prime-t_{k-1}^\prime$  be the time between departures of $k$-th and $(k-1)$-th updates, and let $B_k=t_k-t_{k-1}^\prime$ denote the time required to arrive the $k$-th update  since $(k-1)$-th delivery. We denote $T_k$ and $Y_k$ as $T_k^\star$  and $Y_k^\star$ for the preemptive case. 

We focus  on the analysis of the {\em mean age} and {\em mean peak age}. The mean age, denoted by $\Delta$, is defined as the time mean of the age process $\Delta(t)$, whereas the mean peak age, denoted by $\bar{\mathcal{A}}$, is defined as the mean of age process $\Delta(t)$ observed just before the delivery of updates. For the non-preemptive discipline, these can be expressed as  
\begin{align}
    \bar{\mathcal{A}}&=\mathbb{E}[Y_k]+\mathbb{E}[T_{k-1}],
    \label{eq:mean_peak_age_defination}\\
    \text{and~}\Delta &= 0.5\lambda_e\mathbbm{E}[Y_k^2]+\lambda_e\mathbbm{E}[Y_kT_{k-1}],\label{eq:mean_age_defination}
\end{align}
respectively, where $\lambda_e$ is the effective arrival rate. Please refer to \cite[Section III]{Costa_2016} for more details. For the preemptive discipline, we can determine the mean peak age $\bar{\mathcal{A}}^\star$ and the mean age $\Delta^\star$ using $T_k^\star$ and $Y_k^\star$ in the above expressions. \vspace{-1mm}
\section{Age Analysis for ON-OFF service}\vspace{-1mm}
We first present the age analysis for the non-preemptive case.
From \eqref{eq:mean_peak_age_defination} and \eqref{eq:mean_age_defination}, and $Y_k=B_k+T_k$, it is clear that the mean age analysis requires the first two moments of service time $T_k$. 
As will be evident shortly, the key step in deriving the moments of $T_k$ is conditioning it on the arrival of the delivered update in {\rm On} and {\rm Off} states. The probability of this conditional event is presented in the following lemma.\vspace{-1mm}
\begin{lemma}
\label{lemma:prob_arrive_in_on_state}
The probability that the delivered update arrives in the {\rm On} state is 
\begin{equation}
{\rm P_{On}}={(\lambda+\kappa_{\rm f})}({\lambda+\kappa_{\rm o}+\kappa_{\rm f}})^{-1}.   
\label{eq:prob_arrive_in_on_state}
\end{equation}
\end{lemma}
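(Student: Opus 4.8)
The plan is to reduce the statement to a short competing–exponentials computation by exploiting the renewal structure of the non‑preemptive system together with the memorylessness of every interval involved. Immediately after the $(k-1)$‑th delivery at $t_{k-1}^\prime$ the server is empty---by the no‑buffer assumption nothing is queued behind the update just delivered---so the $k$‑th \emph{delivered} update is nothing but the first arrival after $t_{k-1}^\prime$, occurring at $t_k=t_{k-1}^\prime+B_k$ with $B_k\sim\Exp(\lambda)$ and $B_k$ independent of the {\rm On-Off} process. The key remark I would make is that a service can complete only while the server is in the {\rm On} state, so the {\rm On-Off} process is \emph{deterministically} in the {\rm On} state at the epoch $t_{k-1}^\prime$; by memorylessness of the {\rm On} period the residual {\rm On} time measured from $t_{k-1}^\prime$ is again $\Exp(\kappa_{\rm o})$, and the entire past before $t_{k-1}^\prime$ is irrelevant. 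Thus ${\rm P_{On}}$ equals the probability that the two‑state {\rm On-Off} chain, initialised in the {\rm On} state, is found in the {\rm On} state after an independent $\Exp(\lambda)$ time---a quantity that is the same for every $k$ and therefore coincides with its stationary version.

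To evaluate this probability I would run a first‑transition argument. Started in the {\rm On} state, the next arrival (rate $\lambda$) and the termination of the {\rm On} period (rate $\kappa_{\rm o}$) compete: with probability $\lambda/(\lambda+\kappa_{\rm o})$ the arrival occurs first, landing in the {\rm On} state; with the complementary probability $\kappa_{\rm o}/(\lambda+\kappa_{\rm o})$ the chain enters the {\rm Off} state with, again by memorylessness, a fresh $\Exp(\lambda)$ clock to the next arrival. From the {\rm Off} state the arrival (rate $\lambda$) races the end of the {\rm Off} period (rate $\kappa_{\rm f}$): probability $\lambda/(\lambda+\kappa_{\rm f})$ of arriving in the {\rm Off} state, probability $\kappa_{\rm f}/(\lambda+\kappa_{\rm f})$ of returning to {\rm On}. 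Writing $p={\rm P_{On}}$ and letting $q$ be the analogous probability when the chain starts in the {\rm Off} state, this gives the linear system
\begin{equation}
p=\frac{\lambda}{\lambda+\kappa_{\rm o}}+\frac{\kappa_{\rm o}}{\lambda+\kappa_{\rm o}}\,q,\qquad q=\frac{\kappa_{\rm f}}{\lambda+\kappa_{\rm f}}\,p,
\end{equation}
and eliminating $q$ together with the identity $(\lambda+\kappa_{\rm o})(\lambda+\kappa_{\rm f})-\kappa_{\rm o}\kappa_{\rm f}=\lambda(\lambda+\kappa_{\rm o}+\kappa_{\rm f})$ delivers $p=(\lambda+\kappa_{\rm f})/(\lambda+\kappa_{\rm o}+\kappa_{\rm f})$, i.e.\ \eqref{eq:prob_arrive_in_on_state}. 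As a cross‑check I would also integrate the transient occupancy probability $p_{\rm On}(t)=\frac{\kappa_{\rm f}}{\kappa_{\rm o}+\kappa_{\rm f}}+\frac{\kappa_{\rm o}}{\kappa_{\rm o}+\kappa_{\rm f}}e^{-(\kappa_{\rm o}+\kappa_{\rm f})t}$ against the density $\lambda e^{-\lambda t}$ of $B_k$, which must return the same expression.

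The main obstacle is not the algebra but the justification of the reduction in the first paragraph: one must be sure that (i) the {\rm On-Off} state at a departure epoch is forced to be {\rm On}, (ii) the update that is eventually delivered is exactly the first post‑departure arrival---true only because, in the non‑preemptive discipline with no buffer, an accepted update is never displaced---and (iii) the independence of the {\rm On-Off} process from the arrival stream, together with the memorylessness of the exponential {\rm On}, {\rm Off}, and inter‑arrival times, makes all of the conditioning collapse to the single two‑state question above. Once these points are in place the competing‑exponentials computation is routine, and the $k$‑independence of ${\rm P_{On}}$ makes the phrase ``the delivered update'' in the statement unambiguous.
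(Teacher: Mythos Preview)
Your argument is correct. The reduction is exactly the right one---a departure forces the {\rm On-Off} chain into the {\rm On} state, the next delivered update is the first post-departure arrival, and memorylessness collapses everything to ``probability that the two-state chain, started in {\rm On}, is found in {\rm On} after an independent $\Exp(\lambda)$ time''---and your first-transition linear system solves this cleanly.

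The paper's own proof starts from the same reduction but evaluates the probability by a direct summation rather than a recursion: it writes ${\rm P_{On}}$ as the probability that the $\Exp(\lambda)$ arrival lands in one of the disjoint {\rm On} intervals $(s_i,s_i^\prime]$, computes the conditional probability $\sum_{i\geq 1}[e^{-\lambda s_i}-e^{-\lambda s_i^\prime}]$, takes expectation over the i.i.d.\ {\rm On}/{\rm Off} durations (each factor becomes a Laplace transform), and then sums the resulting geometric series $\sum_{l\geq 0}\bigl[\frac{\kappa_{\rm o}}{\lambda+\kappa_{\rm o}}\frac{\kappa_{\rm f}}{\lambda+\kappa_{\rm f}}\bigr]^l\frac{\lambda}{\lambda+\kappa_{\rm o}}$. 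Your competing-exponentials system is the Markov-chain shortcut to the same geometric series---each pass through an {\rm On}/{\rm Off} cycle contributes a factor $\frac{\kappa_{\rm o}}{\lambda+\kappa_{\rm o}}\frac{\kappa_{\rm f}}{\lambda+\kappa_{\rm f}}$---so the two arguments are equivalent in content but packaged differently. Your route is shorter and, via the $2\times 2$ system, sidesteps the explicit series manipulation; the paper's route makes the individual contributions of successive {\rm On} periods visible, which matches the style it later reuses for the service-time moments in Lemma~\ref{lemma:mean_Tk}. Your cross-check via $\int_0^\infty p_{\rm On}(t)\lambda e^{-\lambda t}\,{\rm d}t$ is a third, equally valid, derivation.
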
\vspace{-2mm}
\begin{proof}
Please refer to Appendix \ref{appendix:prob_arrive_in_on_state} for the proof.
\end{proof}\vspace{-1mm}
\begin{lemma}
\label{lemma:mean_Tk}
The first and second moments of service time $T_k$ for non-preemptive policy are
\begin{align*}
    {\rm \bar{T}_k^1}&=\frac{1}{\mu}+\frac{\kappa_{\rm o}}{\kappa_{\rm f}}\left[\frac{1}{\mu}+\frac{1}{\lambda+\kappa_{\rm o}+\kappa_{\rm f}}\right],\numberthis\label{eq:mean_Tk}\\
 \text{and}~{\rm \bar{T}_k^2}&=\left(\frac{1}{\mu+\kappa_{\rm o}}+\frac{1}{\kappa_{\rm f}}\right)^2 \left(1+3\frac{\kappa_{\rm o}}{\mu}+2\frac{\kappa_{\rm o}^2}{\mu^2}\right)+ \frac{\rm 1}{(\mu+\kappa_{\rm o})^2} \\
    &~~~+\frac{\mu+\kappa_{\rm o}}{\mu\kappa_{\rm f}^2}(1-2{\rm P_{On}})-\frac{2}{\mu\kappa_{\rm f}}{\rm P_{On}},
    \numberthis\label{eq:second_moment_Tk}
\end{align*}
respectively, where ${\rm P_{On}}$ is given in Lemma \ref{lemma:prob_arrive_in_on_state}.
\end{lemma}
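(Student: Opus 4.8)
The plan is to condition the service time $T_k$ on the state ({\rm On} or {\rm Off}) in which the delivered update arrives --- a step made possible by Lemma~\ref{lemma:prob_arrive_in_on_state} --- and to reduce both moments to those of a single auxiliary variable $S$, the \emph{effective service duration} measured from an instant at which the server holds the update and sits inside an {\rm On} state (equivalently, from the start of such an {\rm On} state, by memorylessness of the exponential {\rm On} periods). If the delivered update arrives in an {\rm On} state (probability ${\rm P_{On}}$), service starts immediately and the residual {\rm On} time is again $\Exp(\kappa_{\rm o})$, so $T_k \stackrel{\mathrm{d}}{=} S$; if it arrives in an {\rm Off} state (probability $1-{\rm P_{On}}$), a residual {\rm Off} period $G \sim \Exp(\kappa_{\rm f})$ elapses first and then a fresh {\rm On} period begins, so $T_k \stackrel{\mathrm{d}}{=} G + S$ with $G$ independent of $S$. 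Hence ${\rm \bar{T}_k^1} = \E[S] + (1-{\rm P_{On}})\kappa_{\rm f}^{-1}$ and ${\rm \bar{T}_k^2} = \E[S^2] + (1-{\rm P_{On}})\big(2\kappa_{\rm f}^{-2} + 2\kappa_{\rm f}^{-1}\E[S]\big)$, so everything reduces to computing $\E[S]$ and $\E[S^2]$.

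For $S$ I would set up a renewal argument. Inside any {\rm On} period the residual service requirement is $\Exp(\mu)$ by memorylessness, hence it completes before the {\rm On} period ends with probability $q := \mu/(\mu+\kappa_{\rm o})$; moreover the time actually spent in that {\rm On} period is $\min(\Exp(\mu),\Exp(\kappa_{\rm o})) \sim \Exp(\mu+\kappa_{\rm o})$ whichever of the two events occurs, and it is independent of that event because the minimum of two independent exponentials is independent of the argmin. When the requirement is not met, an {\rm Off} period $\Exp(\kappa_{\rm f})$ passes and the situation restarts afresh. Therefore $S$ satisfies the distributional identity $S \stackrel{\mathrm{d}}{=} E$ with probability $q$ and $S \stackrel{\mathrm{d}}{=} E + F + S'$ with probability $1-q$, where $E \sim \Exp(\mu+\kappa_{\rm o})$, $F \sim \Exp(\kappa_{\rm f})$ and $S' \stackrel{\mathrm{d}}{=} S$ are independent; equivalently, the number $N$ of {\rm On} periods that the service spans is geometric with parameter $q$ and, conditioned on $N=n$, $S$ is the sum of $n$ i.i.d.\ $\Exp(\mu+\kappa_{\rm o})$ variables and $n-1$ i.i.d.\ $\Exp(\kappa_{\rm f})$ variables.

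Taking expectations in the fixed-point identity gives a linear equation whose solution is $\E[S] = \mu^{-1} + \kappa_{\rm o}(\mu\kappa_{\rm f})^{-1}$; inserting this into the expression above for ${\rm \bar{T}_k^1}$ and using $1-{\rm P_{On}} = \kappa_{\rm o}/(\lambda+\kappa_{\rm o}+\kappa_{\rm f})$ yields \eqref{eq:mean_Tk}. Squaring the identity and taking expectations gives a linear equation for $\E[S^2]$ in terms of $\E[S]$ and the first two moments of $E$ and $F$ --- equivalently, one averages the conditional mean and variance of the sum in the previous paragraph over the geometric $N$, which calls for $\E[N]$, $\E[N^2]$ and $\E[N(N-1)]$. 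Substituting the resulting $\E[S^2]$ together with $\E[S]$ into ${\rm \bar{T}_k^2} = \E[S^2] + (1-{\rm P_{On}})\big(2\kappa_{\rm f}^{-2}+2\kappa_{\rm f}^{-1}\E[S]\big)$, using the identity $1+3\kappa_{\rm o}/\mu+2\kappa_{\rm o}^2/\mu^2 = (\mu+\kappa_{\rm o})(\mu+2\kappa_{\rm o})/\mu^2$, and grouping the $\kappa_{\rm f}^{0}$, $\kappa_{\rm f}^{-1}$ and $\kappa_{\rm f}^{-2}$ contributions while rewriting the $1-{\rm P_{On}}$ coefficients in terms of ${\rm P_{On}}$, produces \eqref{eq:second_moment_Tk}.

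Conceptually this is all a consequence of memorylessness of the four exponential clocks together with the independence of $\min(\Exp,\Exp)$ from the argmin, so the only real obstacle is the algebraic bookkeeping in the final step: a number of terms carrying different powers of $\kappa_{\rm f}$ and of $\mu+\kappa_{\rm o}$ have to be combined and re-cast into the specific shape featuring ${\rm P_{On}}$ and $\big(\tfrac{1}{\mu+\kappa_{\rm o}}+\tfrac{1}{\kappa_{\rm f}}\big)^2$, and in the second-moment step the cross term proportional to $\E[N(N-1)]\,\E[E]\,\E[F]$ must be retained correctly.
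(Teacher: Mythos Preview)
Your proposal is correct and follows essentially the same route as the paper: condition on whether the delivered update arrives in an {\rm On} or {\rm Off} state via Lemma~\ref{lemma:prob_arrive_in_on_state}, observe that the number $N$ of {\rm On} periods spanned by the service is geometric with success probability $\mu/(\mu+\kappa_{\rm o})$, and that the time spent in each visited {\rm On} period is $\Exp(\mu+\kappa_{\rm o})$ regardless of whether service completes there (your min/argmin independence remark is exactly what the paper establishes by showing $Z'$ and $T_{\rm o}'$ share the distribution $\Exp(\mu+\kappa_{\rm o})$). The only cosmetic difference is that you solve a renewal fixed-point equation for $\E[S]$ and $\E[S^2]$, whereas the paper writes the same quantities as explicit geometric sums $\mathcal{G}(n)$, $\mathcal{G}(n^2)$, $\mathcal{G}(n(n\pm 1))$ and evaluates them; the ingredients and the resulting algebra are identical.
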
\vspace{-3mm}
\begin{proof}
Please refer to Appendix \ref{appendix:mean_Tk} for the proof.
\end{proof}\vspace{-1mm} 
The effective arrival rate for any given queueing system can be calculated using the its inter-departure times distribution $Y_k$ as $\lambda_{\rm eff}=\mathbb{E}[Y_k]^{-1}.$
Finally, using  the above lemmas  along with \eqref{eq:mean_peak_age_defination} and \eqref{eq:mean_age_defination}, we obtain the mean  age  in the following theorem.
\begin{thm}
\label{theorem:Peak_age_without_preemption}
For $M/M/1/1$ queue with {\rm On-Off} service under non-preemptive policy, the mean peak age and mean age   are 
\begin{align}
    \bar{\mathcal{A}} &=\frac{1}{\lambda}+\frac{2}{\mu}+\frac{2\kappa_{\rm o}}{\kappa_{\rm f}}\left[\frac{1}{\mu}+\frac{1}{\lambda+\kappa_{\rm o}+\kappa_{\rm f}}\right],\label{eq:mean_peak_age}\\
   \text{and~~} \Delta&=\frac{1}{1+\lambda {\rm \bar{T}_k^1}}\left[\frac{1}{\lambda}+\frac{\lambda}{2}{\rm \bar{T}_k^2}+{\rm \bar{T}_k^1}\right]+{\rm \bar{T}_k^1},\numberthis\label{eq:mean_age}
\end{align}
respectively, where ${\rm \bar{T}_k^1}$ and ${\rm \bar{T}_k^2}$ are given in Lemma \ref{lemma:mean_Tk}.
\end{thm}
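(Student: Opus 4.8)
The plan is to assemble the two claimed formulas directly from the definitions \eqref{eq:mean_peak_age_defination}–\eqref{eq:mean_age_defination} together with the moment results already established in Lemma \ref{lemma:mean_Tk}. For the mean peak age, I would start from $\bar{\mathcal{A}} = \mathbb{E}[Y_k] + \mathbb{E}[T_{k-1}]$ and use the decomposition $Y_k = B_k + T_k$ stated in the text. Since the arrival process is memoryless with rate $\lambda$ and the server empties out after each delivery, $B_k$ is exponential with rate $\lambda$, so $\mathbb{E}[B_k] = 1/\lambda$; this uses the ``no buffer'' assumption, which guarantees that after a departure the server waits for a fresh Poisson arrival. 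Then $\mathbb{E}[Y_k] = 1/\lambda + {\rm \bar{T}_k^1}$, and since $T_{k-1}$ and $T_k$ are i.i.d., $\mathbb{E}[T_{k-1}] = {\rm \bar{T}_k^1}$ as well, giving $\bar{\mathcal{A}} = 1/\lambda + 2{\rm \bar{T}_k^1}$. Substituting the expression for ${\rm \bar{T}_k^1}$ from \eqref{eq:mean_Tk} yields \eqref{eq:mean_peak_age} after trivial simplification.

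For the mean age, I would begin from $\Delta = 0.5\,\lambda_e\,\mathbb{E}[Y_k^2] + \lambda_e\,\mathbb{E}[Y_k T_{k-1}]$ with $\lambda_e = \mathbb{E}[Y_k]^{-1} = (1/\lambda + {\rm \bar{T}_k^1})^{-1} = \lambda/(1+\lambda{\rm \bar{T}_k^1})$, as recorded just before the theorem. Expanding $Y_k = B_k + T_k$ gives $\mathbb{E}[Y_k^2] = \mathbb{E}[B_k^2] + 2\mathbb{E}[B_k]\mathbb{E}[T_k] + \mathbb{E}[T_k^2] = 2/\lambda^2 + (2/\lambda){\rm \bar{T}_k^1} + {\rm \bar{T}_k^2}$, using independence of $B_k$ and $T_k$ and $\mathbb{E}[B_k^2]=2/\lambda^2$. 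For the cross term, $\mathbb{E}[Y_k T_{k-1}] = \mathbb{E}[(B_k+T_k)T_{k-1}] = \mathbb{E}[B_k]\mathbb{E}[T_{k-1}] + \mathbb{E}[T_k]\mathbb{E}[T_{k-1}]$, where I would argue that $T_{k-1}$ is independent of both $B_k$ (the $k$-th interarrival gap, which starts only after the $(k-1)$-th departure and involves fresh randomness in arrivals and in the On--Off process going forward) and $T_k$ (the $k$-th service, similarly determined by independent future randomness). This gives $\mathbb{E}[Y_k T_{k-1}] = (1/\lambda){\rm \bar{T}_k^1} + ({\rm \bar{T}_k^1})^2$. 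Plugging both into the formula for $\Delta$, factoring out $\lambda_e$, and collecting the $1/\lambda$, quadratic-in-$T$, and linear-in-$T$ pieces should reproduce \eqref{eq:mean_age}; I would match the bracketed term $\frac{1}{\lambda} + \frac{\lambda}{2}{\rm \bar{T}_k^2} + {\rm \bar{T}_k^1}$ against $0.5\lambda\,\mathbb{E}[Y_k^2] + \lambda\,\mathbb{E}[Y_kT_{k-1}]$ divided by $(1+\lambda{\rm \bar{T}_k^1})$, and peel off the remaining additive ${\rm \bar{T}_k^1}$ (note that the cross-term's $({\rm \bar{T}_k^1})^2$ contribution is what produces this trailing term after the algebra).

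The main obstacle is not computational but a matter of justifying the independence structure used above, namely that $T_{k-1}$ is independent of $(B_k, T_k)$ and that $B_k \sim \mathrm{Exp}(\lambda)$ regardless of the On--Off state at the instant of the $(k-1)$-th departure. The subtlety is the state of the On--Off process at the $(k-1)$-th departure epoch: that departure occurs during an On period, and the residual On/Off dynamics from that epoch onward are what govern $B_k$ and $T_k$. By the strong Markov property of the independent On--Off process and the memorylessness of both the arrival and service clocks, the post-departure evolution is probabilistically identical for every $k$ and independent of everything that produced $T_{k-1}$; $B_k$ depends only on the arrival clock (rate $\lambda$) and is therefore exponential, while $T_k$ picks up the On--Off residuals afresh. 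I would state this regeneration argument explicitly but briefly, then declare the remaining steps routine algebra and defer the verbatim substitution of \eqref{eq:mean_Tk} and \eqref{eq:second_moment_Tk} into the collected expression, since it only reshuffles the given closed forms.
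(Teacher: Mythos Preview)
Your proposal is correct and follows essentially the same route as the paper's own proof: decompose $Y_k=B_k+T_k$, use $\mathbb{E}[B_k]=1/\lambda$, $\mathbb{E}[B_k^2]=2/\lambda^2$, the independence of $T_{k-1}$ from $(B_k,T_k)$, and Lemma~\ref{lemma:mean_Tk} to reduce \eqref{eq:mean_peak_age_defination}--\eqref{eq:mean_age_defination} to \eqref{eq:mean_peak_age}--\eqref{eq:mean_age}. If anything, you are more explicit than the paper in articulating the regeneration argument at the departure epoch (the paper simply invokes ``independence of $T_k$'s''), so your write-up would only strengthen the justification.
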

\begin{proof}
The mean peak age given in \eqref{eq:mean_peak_age} directly follows by substituting  $\mathbb{E}[Y_{k}]=\mathbb{E}[B_k]+\mathbb{E}[T_k]=\frac{1}{\lambda}+{\rm \bar{T}_k^1}$  in \eqref{eq:mean_peak_age_defination}, where ${\rm \bar{T}_k^1}$ is given in Lemma \ref{lemma:mean_Tk}.
Next, using $Y_k=B_k+T_k$, \eqref{eq:mean_age_defination} and independence of $T_k$'s, the mean age can be written as 
\begin{align*}
\Delta=     \frac{1}{\mathbb{E}[Y_k]}\left[\mathbb{E}[T_k^2]+\mathbb{E}[B_k^2]+2\mathbb{E}[T_k]\mathbb{E}[B_k]\right] + \mathbb{E}[T_k].
\end{align*}
Further, substituting $\mathbb{E}[Y_k]=\frac{1}{\lambda}+{\rm \bar{T}_k^1}$,    and ${\rm \bar{T}_k^1}$ and $\mathbb{E}[T_k^2]={\rm \bar{T}_k^2}$ from Lemma \ref{lemma:mean_Tk}, provides \eqref{eq:mean_age}.
\end{proof} \vspace{-2mm}

Recall that in the preemptive policy  the in-service update  is replaced with a newly arriving update during {\rm Off} states. For the analysis of this case, the crucial step is to derive the mean of the service time $T_k^\star$. 
Similar to the analysis of  $T_k$, we derive the mean of $T_k^\star$ by conditioning on the arrival of delivered update in {\rm On}/{\rm Off} state. 
The probability of arrival of a delivered update in {\rm On} state is smaller, compared to  non-preemptive case, due to the replacement of older updates with new ones arrived in the {\rm Off} states. This probability is derived in Lemma \ref{lemma:prob_arrive_in_on_state_with_preemption}. \vspace{-2mm}
\begin{lemma}
\label{lemma:prob_arrive_in_on_state_with_preemption}
The probability that the successfully delivered update under preemption discipline arrives in the {\rm On} state is 
\begin{equation}
    {\rm P^\star_{On}} = {\rm P_{On}}(1-\beta)(1-\alpha\beta)^{-1},
    \label{eq:prob_arrive_in_on_state_with_preemption}
\end{equation}
where ${\rm P_{On}}$ is given in \eqref{eq:prob_arrive_in_on_state}, $\beta=\frac{\kappa_{\rm o}}{\mu+\kappa_{\rm o}}$ and $\alpha=\frac{\kappa_{\rm f}}{\lambda+\kappa_{\rm f}}$.
\end{lemma}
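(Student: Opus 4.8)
The plan is to track a candidate update that enters service and ask for the probability that it was generated while the server was On, given that it is the one ultimately delivered. The key observation is that under preemption the "eventually delivered" update is obtained from a renewal-type argument: each time a fresh update starts being served, it either completes service (and is delivered) or it is preempted by a newer arrival during an Off period; in the latter case the process restarts from the point of view of the new update. So I would set up a small Markov chain on the state of the server at the instant a new update begins service — essentially {On, Off} — and compute the probability that service completes before a preemption, distinguishing the two cases.

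First I would condition on the state in which the delivered update's \emph{service} begins. If service begins in an On state, the update can only be preempted after the server first goes Off and then a new arrival occurs before the server comes back On; the probability of "service finishes before the server goes Off" is $\mu/(\mu+\kappa_{\rm o})$, so $\beta=\kappa_{\rm o}/(\mu+\kappa_{\rm o})$ is the probability the On period ends first. Once in an Off period, a competing new arrival (rate $\lambda$) races against the return to On (rate $\kappa_{\rm f}$); the probability the arrival wins — causing preemption — is $\kappa_{\rm f}/(\lambda+\kappa_{\rm f})$... here I need to be careful about which parameter is which, and indeed the statement defines $\alpha=\kappa_{\rm f}/(\lambda+\kappa_{\rm f})$, so $\alpha$ is the probability that the Off period is interrupted by a new arrival (hence a preemption) rather than ending on its own. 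Chaining these, from "service just started in On" the probability of eventually \emph{not} being preempted on this On stint is $1-\beta$, and if it is preempted and re-enters service it does so again in an Off state with the fresh update; the probability of surviving a round that starts in Off is then $1-\alpha$ times the subsequent On-stint survival, and the geometric bookkeeping produces the factor $(1-\alpha\beta)^{-1}$ as the sum of the resulting series.

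Concretely, I would let $q_{\rm on}$ (resp. $q_{\rm off}$) be the probability that an update whose service currently begins in an On (resp. Off) state is the one ultimately delivered, normalized so that the "arrived in On" flag is carried along. The point is that an update that arrives in the On state and enters service is a genuine candidate, whereas an update arriving in an Off state is immediately the in-service update (it preempts); in both cases the recursion for survival is the same geometric structure, so the conditional probability "arrived in On $\mid$ delivered" equals $\mathrm{P_{On}}$ times the ratio of survival weights, which collapses to $\mathrm{P_{On}}(1-\beta)(1-\alpha\beta)^{-1}$. I would then verify the normalization by checking that the total delivery probability (On-origin plus Off-origin candidates) sums correctly, so that the conditional probability is exactly \eqref{eq:prob_arrive_in_on_state_with_preemption}.

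The main obstacle I expect is getting the conditioning exactly right: one must be careful that "arrives in On" and "service begins in On" coincide for the relevant update (they do, since an update that arrives in On and finds the server free starts service immediately in On), and that an update arriving in Off \emph{always} becomes the in-service update by preemption, so its "service begins" state is Off with probability one. Once that dictionary between arrival-state and service-start-state is pinned down, the renewal/geometric computation is routine; the detailed algebra would be carried out in Appendix \ref{appendix:prob_arrive_in_on_state_with_preemption}.
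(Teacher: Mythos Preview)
Your approach has the right overall shape---compute the survival probability of an update that enters service in the On state and multiply by ${\rm P_{On}}$---but there is a concrete error in your identification of $\alpha$. You write that $\alpha=\kappa_{\rm f}/(\lambda+\kappa_{\rm f})$ is ``the probability that the Off period is interrupted by a new arrival (hence a preemption).'' This is backwards. The Off period has exponential duration with rate $\kappa_{\rm f}$, and arrivals occur at rate $\lambda$; by the standard race of exponentials, the probability that the Off period ends \emph{before} any new arrival is $\kappa_{\rm f}/(\lambda+\kappa_{\rm f})=\alpha$. Thus $\alpha$ is the probability of \emph{no} preemption during a single Off period, and $1-\alpha$ is the preemption probability. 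With your reversed interpretation, the recursion you set up would read $q_{\rm on}=(1-\beta)+\beta(1-\alpha)q_{\rm on}$, giving $q_{\rm on}=(1-\beta)/(1-\beta+\alpha\beta)$, which is not the claimed $(1-\beta)/(1-\alpha\beta)$. With the correct reading the recursion is $q_{\rm on}=(1-\beta)+\beta\alpha\,q_{\rm on}$, and the geometric sum does collapse to $(1-\beta)(1-\alpha\beta)^{-1}$.

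A second, smaller point: the normalization and ``ratio of survival weights'' machinery you propose is unnecessary once you notice that any preempting update necessarily arrives during an Off state. Hence the delivered update arrived in On if and only if the \emph{first} update of the busy period arrived in On and was never preempted, so ${\rm P^\star_{On}}={\rm P_{On}}\cdot{\rm P_{pre}^c}$ with no further conditioning. This is exactly how the paper proceeds in Appendix~\ref{appendix:prob_arrive_in_on_state_with_preemption}: it writes ${\rm P_{pre}^c}=\sum_{n\geq 1}\alpha^{n-1}\mathbb{P}[\mathcal{E}_n]={\rm P_o}\sum_{n\geq 1}(\alpha\beta)^{n-1}=(1-\beta)/(1-\alpha\beta)$, with $\alpha$ correctly identified as the no-arrival probability in an Off period. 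Your $q_{\rm on},q_{\rm off}$ setup would work too once $\alpha$ is fixed, but you should drop the claim that a separate normalization check is needed.
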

\begin{proof}
Please refer to Appendix \ref{appendix:prob_arrive_in_on_state_with_preemption} for the proof.
\end{proof}
\begin{lemma}
\label{lemma:mean_Tk_with_preemption}
The mean service time $T_k^\star$ with preemption is
\begin{equation*}
    {\rm \bar{T}_k^{1,\star}}=\frac{1}{1-\gamma}\left[\frac{1}{\lambda+\kappa_{\rm f}}+\frac{1}{\mu+\kappa_{\rm o}}\frac{\lambda+\kappa_{\rm o}+\kappa_{\rm f}-\mu}{\lambda+\kappa_{\rm o}+\kappa_{\rm f}}\right],\numberthis
    \label{eq:mean_Tk_with_preemption}
\end{equation*}
where $\gamma={\kappa_{\rm o}\kappa_{\rm f}}{(\lambda+\kappa_{\rm f})^{-1}(\mu+\kappa_{\rm o})^{-1}}$. 
\end{lemma}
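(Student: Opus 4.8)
The plan is to compute $\mathbb{E}[T_k^\star]$ by conditioning on whether the delivered update arrived during an {\rm On} or an {\rm Off} state, with the two cases weighted by ${\rm P^\star_{On}}$ and $1-{\rm P^\star_{On}}$ from Lemma~\ref{lemma:prob_arrive_in_on_state_with_preemption}. The key observation is that, by the memoryless property of all the relevant clocks, the evolution of a delivered update's service from the instant service actually begins --- which is the arrival instant if the update arrived in {\rm On}, and the end of the ongoing {\rm Off} period if it arrived in {\rm Off} --- is statistically identical in both cases; call this common ``service proper'' $W$. An update that arrives in an {\rm Off} state incurs, in addition, an initial wait equal to the residual {\rm Off} time; conditioned on the update surviving to the end of that {\rm Off} period (no new arrival occurs first), this wait is $\min(\Exp(\kappa_{\rm f}),\Exp(\lambda))\sim\Exp(\lambda+\kappa_{\rm f})$ with mean $(\lambda+\kappa_{\rm f})^{-1}$. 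Hence I would write ${\rm \bar{T}_k^{1,\star}}=\mathbb{E}[W\mid\text{delivered}]+(1-{\rm P^\star_{On}})(\lambda+\kappa_{\rm f})^{-1}$, and it remains to evaluate $\mathbb{E}[W\mid\text{delivered}]$.

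To do that, I would exploit the cyclic structure of the service. In each cycle the server is in an {\rm On} state and actively serving: with probability $1-\beta=\mu/(\mu+\kappa_{\rm o})$ the service finishes before the {\rm On} period ends and the update departs; with probability $\beta$ the {\rm On} period ends first and an {\rm Off} period begins, during which the update is preempted (hence \emph{not} delivered) if a new arrival comes first, or, with probability $\alpha=\kappa_{\rm f}/(\lambda+\kappa_{\rm f})$, the {\rm Off} period ends first and a fresh cycle starts. Consequently, conditioned on eventual delivery, the number of {\rm On}-serving sub-periods is $1$ plus a geometric variable with continuation probability $\gamma=\alpha\beta$, so it has mean $(1-\gamma)^{-1}$, and the number of interleaved (surviving) {\rm Off} sub-periods has mean $\gamma(1-\gamma)^{-1}$. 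Each {\rm On}-serving sub-period, being $\min(\Exp(\mu),\Exp(\kappa_{\rm o}))$, has length $\sim\Exp(\mu+\kappa_{\rm o})$ independently of which clock ``won'', and each surviving {\rm Off} sub-period has length $\sim\Exp(\lambda+\kappa_{\rm f})$; a Wald-type identity (the sub-period lengths are conditionally independent given the sequence of cycle outcomes) then yields $\mathbb{E}[W\mid\text{delivered}]=(1-\gamma)^{-1}(\mu+\kappa_{\rm o})^{-1}+\gamma(1-\gamma)^{-1}(\lambda+\kappa_{\rm f})^{-1}$.

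Finally I would substitute ${\rm P^\star_{On}}={\rm P_{On}}(1-\beta)(1-\alpha\beta)^{-1}$, ${\rm P_{On}}=(\lambda+\kappa_{\rm f})(\lambda+\kappa_{\rm o}+\kappa_{\rm f})^{-1}$, and the definitions of $\alpha,\beta,\gamma$ into ${\rm \bar{T}_k^{1,\star}}=\mathbb{E}[W\mid\text{delivered}]+(1-{\rm P^\star_{On}})(\lambda+\kappa_{\rm f})^{-1}$, place everything over the common denominator $(\lambda+\kappa_{\rm o}+\kappa_{\rm f})(\lambda\mu+\lambda\kappa_{\rm o}+\mu\kappa_{\rm f})$, and use the identity $(\lambda+\kappa_{\rm f})(\mu+\kappa_{\rm o})(1-\gamma)=\lambda\mu+\lambda\kappa_{\rm o}+\mu\kappa_{\rm f}$ to collapse the result to \eqref{eq:mean_Tk_with_preemption}. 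I expect the main obstacles to be: (i) setting up the conditioning so that precisely the delivery-conditioned probability ${\rm P^\star_{On}}$ of Lemma~\ref{lemma:prob_arrive_in_on_state_with_preemption} appears as the weight, and justifying --- via memorylessness --- that conditioned on delivery the residual {\rm Off} wait and the service proper $W$ decouple and that the law of $W$ does not depend on the arrival state; and (ii) the final, somewhat tedious algebraic consolidation into the stated closed form.
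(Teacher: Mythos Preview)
Your proposal is correct and follows essentially the same route as the paper: condition on whether the delivered update arrived in an {\rm On} or {\rm Off} state with weights ${\rm P_{On}^\star}$ and $1-{\rm P_{On}^\star}$, observe that under preemption the number of {\rm On} sub-periods is geometric with parameter $\gamma=\alpha\beta$ (the paper writes this as $\mathbb{P}[\mathcal{E}_n^\star]=(1-\gamma)\gamma^{n-1}$), that each {\rm On} sub-period has law $\Exp(\mu+\kappa_{\rm o})$ and each survived {\rm Off} sub-period has law $\Exp(\lambda+\kappa_{\rm f})$, and then combine. Your only (cosmetic) difference is that you factor out the common ``service proper'' $W$ and add a single extra {\rm Off} wait for the {\rm Off}-arrival case, whereas the paper computes the {\rm On}- and {\rm Off}-arrival conditional means separately and then averages; the two are algebraically identical since ${\rm \bar{T}_k^{1\star,f}}-{\rm \bar{T}_k^{1\star,o}}=(\lambda+\kappa_{\rm f})^{-1}$.
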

\begin{proof}
Please refer to Appendix \ref{appendix:mean_Tk_with_preemption} for the proof.
\end{proof} 
Finally, using results from Lemma \ref{lemma:prob_arrive_in_on_state_with_preemption} and \ref{lemma:mean_Tk_with_preemption}, we obtain the mean age for the preemption case in the following theorem.\vspace{-1mm}
\begin{thm}
\label{theorem:Peak_age_with_preemption}
For $M/M/1/1$ queue with {\rm On-Off} service and preemption policy,  the mean peak age and mean age are 
\begin{align}
    \bar{\mathcal{A}}^\star =\frac{1}{\lambda}+\frac{1}{\mu}+\frac{\kappa_{\rm o}}{\kappa_{\rm f}}\left[\frac{1}{\mu}+\frac{1}{\lambda+\kappa_{\rm o}+\kappa_{\rm f}}\right] + {\rm \bar{T}_k^{1,\star}},\label{eq:mean_peak_age_preemption}\\
    \text{and~~}\Delta^\star=\frac{1}{1+\lambda {\rm \bar{T}_k^1}}\left[\frac{1}{\lambda}+\frac{\lambda}{2}{\rm \bar{T}_k^2}+{\rm \bar{T}_k^1}\right]+{\rm \bar{T}_k^{1,\star}},\numberthis\label{eq:mean_age_preemption}
\end{align}
respectively, where ${\rm \bar{T}_k^1}$ and ${\rm \bar{T}_k^2}$ are given in Lemma \ref{lemma:mean_Tk} and ${\rm \bar{T}_k^{1,\star}}$ is given in Lemma \ref{lemma:mean_Tk_with_preemption}.
\end{thm}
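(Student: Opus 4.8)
The plan is to follow the proof of Theorem~\ref{theorem:Peak_age_without_preemption} almost verbatim, the one genuinely new ingredient being a comparison of the preemptive and non-preemptive departure processes. The guiding observation is that the two disciplines differ only in \emph{which} held update is eventually delivered, not in \emph{when} departures occur: in both systems the server is idle after a departure until the next arrival and then holds an update --- whose identity may change under preemption --- continuously until the next departure; service advances only during {\rm On} states; and, because the service requirement is $\Exp(\mu)$ and hence memoryless, the residual service requirement at the start of each {\rm On} state is a fresh $\Exp(\mu)$ whatever happened --- including any overwrites --- during the preceding {\rm Off} states. Coupling the two systems on a common arrival process, a common {\rm On}/{\rm Off} path and a common realization of this (self-resetting) service clock therefore makes their departure epochs coincide pathwise. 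Consequently the inter-departure time $Y_k^\star$ equals $Y_k$ in law; writing $\tilde T_k^\star$ for its busy portion (the span from the first post-departure arrival to the next departure, which may now contain several preempted updates) we get $Y_k^\star=B_k+\tilde T_k^\star$ with $B_k\sim\Exp(\lambda)$ and $\tilde T_k^\star\stackrel{d}{=}T_k$, so that $\mathbb{E}[\tilde T_k^\star]={\rm \bar{T}_k^1}$, $\mathbb{E}[(\tilde T_k^\star)^2]={\rm \bar{T}_k^2}$, $\mathbb{E}[Y_k^\star]=\tfrac1\lambda+{\rm \bar{T}_k^1}$, and the effective rate $\lambda_e^\star=\mathbb{E}[Y_k^\star]^{-1}$ coincides with $\lambda_e$.

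With this in hand the two formulas drop out. For the mean peak age, \eqref{eq:mean_peak_age_defination} in the preemptive case reads $\bar{\mathcal{A}}^\star=\mathbb{E}[Y_k^\star]+\mathbb{E}[T_{k-1}^\star]=\tfrac1\lambda+{\rm \bar{T}_k^1}+{\rm \bar{T}_k^{1,\star}}$, and inserting ${\rm \bar{T}_k^1}$ from Lemma~\ref{lemma:mean_Tk} together with $\mathbb{E}[T_{k-1}^\star]={\rm \bar{T}_k^{1,\star}}$ from Lemma~\ref{lemma:mean_Tk_with_preemption} gives \eqref{eq:mean_peak_age_preemption}. For the mean age, substitute $Y_k^\star=B_k+\tilde T_k^\star$ into \eqref{eq:mean_age_defination} and proceed exactly as in Theorem~\ref{theorem:Peak_age_without_preemption}: using $\mathbb{E}[B_k]=\tfrac1\lambda$, $\mathbb{E}[B_k^2]=\tfrac2{\lambda^2}$, the independence of $B_k$, $\tilde T_k^\star$ and the previous-cycle quantity $T_{k-1}^\star$ (the same renewal-at-departure argument used there), the identifications $\mathbb{E}[\tilde T_k^\star]={\rm \bar{T}_k^1}$ and $\mathbb{E}[(\tilde T_k^\star)^2]={\rm \bar{T}_k^2}$, and the relation $\lambda_e^\star\,\mathbb{E}[Y_k^\star]=1$, the cross term $\lambda_e^\star\,\mathbb{E}[Y_k^\star T_{k-1}^\star]$ collapses to ${\rm \bar{T}_k^{1,\star}}$ while the remaining terms reproduce verbatim the bracket multiplying $\tfrac1{1+\lambda{\rm \bar{T}_k^1}}$ in \eqref{eq:mean_age}; this yields \eqref{eq:mean_age_preemption}. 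Compactly, $\bar{\mathcal{A}}^\star=\bar{\mathcal{A}}-{\rm \bar{T}_k^1}+{\rm \bar{T}_k^{1,\star}}$ and $\Delta^\star=\Delta-{\rm \bar{T}_k^1}+{\rm \bar{T}_k^{1,\star}}$.

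The main obstacle is the coupling claim that the preemptive and non-preemptive departure processes agree in law, i.e.\ $\tilde T_k^\star\stackrel{d}{=}T_k$: this is the single place where the preemptive dynamics must be confronted, and it rests on pairing the memorylessness of the $\Exp(\mu)$ service with the fact that service is suspended during {\rm Off} states, so that overwriting the held update alters neither the {\rm On}/{\rm Off} path nor the instant of the next completion. A secondary point requiring care is to keep the delivered update's own service time $T_k^\star$ --- which governs $\mathbb{E}[T_{k-1}^\star]$ and is the quantity computed in Lemma~\ref{lemma:mean_Tk_with_preemption} --- distinct from the busy period $\tilde T_k^\star$ appearing inside $Y_k^\star$: only $\tilde T_k^\star$ inherits the non-preemptive moments ${\rm \bar{T}_k^1}$ and ${\rm \bar{T}_k^2}$, whereas $\mathbb{E}[T_k^\star]={\rm \bar{T}_k^{1,\star}}$ is the genuinely different (and generically smaller) quantity, so that conflating the two would spoil both the peak-age and the mean-age computation.
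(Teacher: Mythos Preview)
Your proposal is correct and follows essentially the same approach as the paper: the key step in both is the observation that preemption changes which update is delivered but not the departure epochs, so $Y_k^\star\stackrel{d}{=}Y_k$ (the paper states this as $Y_k=T_k+B_k=T_k^\star+B_k^\star=Y_k^\star$ and appeals to memorylessness), after which one reuses the moments $\mathbb{E}[Y_k]$, $\mathbb{E}[Y_k^2]$ from the non-preemptive analysis and replaces the additive $\mathbb{E}[T_{k-1}]$ term by $\mathbb{E}[T_{k-1}^\star]={\rm \bar T}_k^{1,\star}$. Your explicit coupling argument and the careful distinction between the busy period $\tilde T_k^\star$ and the delivered update's service time $T_k^\star$ make the mechanism more transparent than the paper's terse justification, but the underlying idea and the computation are the same.
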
 \vspace{-2mm}
\begin{proof}
Due to preemption, the time of arrival of the update (that is getting delivered) since its last delivery is higher as compared to that under non-preemption case, i.e., $B_k^\star\geq B_k$. 
This increment is equal to the reduction in the service time due to preemption.  This is because the preemption under memoryless service process essentially replaces the older packet with the new one without affecting the service/inter-departure time statistics. Hence, we have $Y_k=T_k+B_k=T_k^\star+B_k^\star=Y_k^\star$.
This can also be verified from Fig. \ref{fig:age_sample_path}. This also implies that the effective arrival rates $\lambda_e$ with the preemption and  non-preemption disciplines are the same. Thus, using $\lambda_e=\frac{1}{\mathbb{E}[Y_k]}$, 
 \begin{align*}
     \mathbb{E}[Y_k^\star]&=\mathbb{E}[Y_k]=\lambda^{-1}+{\rm \bar{T}_k^1},\\
     \mathbb{E}[Y_k^{\star^2}]=\mathbb{E}[Y_k^2]&={\rm \bar{T}_k^2}+{2}{\lambda^{-2}}+2{\rm \bar{T}_k^1}\lambda^{-1},
 \end{align*}
   and following the steps provided in the proof of Theorem \ref{theorem:Peak_age_without_preemption}, we obtain expressions given in \eqref{eq:mean_peak_age_preemption} and \eqref{eq:mean_age_preemption}.
\end{proof}\vspace{-2mm}
\begin{remark}
The mean peak age and mean age derived for both non-preemptive  (in Theorem \ref{theorem:Peak_age_without_preemption}) and primitive  (in Theorem \ref{theorem:Peak_age_with_preemption}) disciplines
approach to $\frac{1}{\lambda}+\frac{2}{\mu}$ and $\frac{1}{\lambda}+\frac{2}{\mu}-\frac{1}{\lambda+\mu}$, respectively, as $\kappa_{\rm o}\to 0$ and/or $\kappa_{\rm f}\to\infty$. These limiting values are equal to the mean peak age and mean age observed under conventional $M/M/1/1$ queue \cite[Eq. (21) and Eq. (25)]{Costa_2016}. This is  expected since the service tends to appear as uninteruppted as the {\rm On} state duration becomes larger and/or the {\rm Off} state duration becomes smaller, in which case the considered queue discipline will behave similar to the conventional $M/M/1/1$ queue. 
\end{remark}\vspace{-2mm}
Fig. \ref{fig:peak_age} shows the age as a function of $\lambda$ for $\mu=1$ and different scales of the {\rm On-Off} state duration. 
The figure verifies that both peak and mean age are minimum for $M/M/1/1$ system without {\rm On-Off} service, which is expected. 
Besides, the figure also illustrates that the preemption policy provides smaller peak as well as mean age as compared to non-preemption policy, for any given configuration of parameters.
It can be seen that both the age metrics degrade 
 for $\kappa_{\rm o}> \kappa_{\rm f}$. 
 However, interestingly, it can be noted that the mean ages are smaller 
 for $\kappa_{\rm o}\geq \mu$.
 This is because of the smaller {\rm On-Off} cycles resulting in a frequent restart of the memoryless service such that it reduces the overall service time.  
\begin{figure}
    \centering
    \includegraphics[width=.43\textwidth]{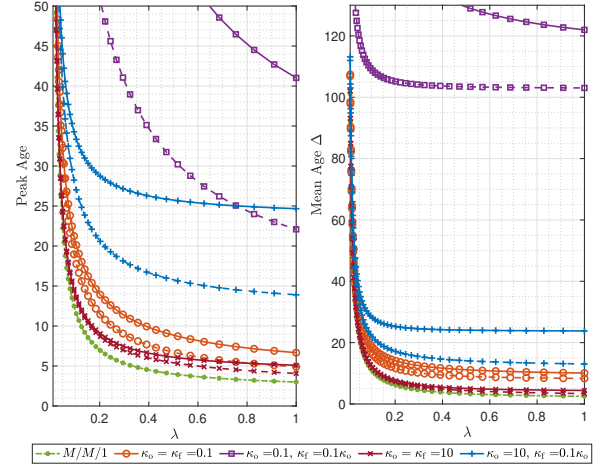} \vspace{-4mm}
    \caption{Age vs.  $\lambda$ for $\mu=1$, $\frac{\mu}{\kappa_{\rm o}}\in\{1,10\}$, and $\frac{\kappa_{\rm o}}{\kappa_{\rm f}}\in\{1,10\}$. The solid and dotted lines show the non-preemptive and preemptive policies, respectively. } 
    \label{fig:peak_age}
\end{figure} \vspace{-1mm}
\section{Summary} \vspace{-1mm}
This paper has analyzed the age of information for bufferless systems where both the update arrival and service processes are exponential and the service process is modulated with an independent {\rm On-Off} process. 
We have derived the mean peak age and mean age for the discipline where the updates arriving when the server is busy are discarded. Next, we extended the analysis for the preemptive discipline where the source is assumed to be capable of replacing the stale updates with fresh ones arrived during {\rm Off} states. The derived expressions for both the mean peak age and the mean age reduce to those for the simple $M/M/1$ queue as the {\rm On} state parameter $\kappa_{\rm o}\to 0$ and/or the {\rm Off} state parameter $\kappa_{\rm f}\to \infty$.\vspace{-1mm}
\appendix \vspace{-2mm}
\subsection{Proof of Lemma \ref{lemma:prob_arrive_in_on_state}}
\label{appendix:prob_arrive_in_on_state}\vspace{-1mm} Let $s_i$ and $s_i^\prime$ denote the start times of the $i$-th {\rm On} and {\rm Off} states, respectively.
Without loss of generality, we assume that a successful delivery occurs at $t=0$ and the corresponding {\rm On} state is the $1$-st {\rm On} state. The probability that next update 
 arrives in an {\rm On} state is equal to the probability that the next arrival occurs in the interval  $\bigcup_{i=1}^\infty (s_i,s_i^\prime]$. Note that $s_1=0$,  
    $s_i=\sum_{k=1}^{i-1}(T_{{\rm o},k}+T_{{\rm f},k})$ and $s_i^\prime =s_i + T_{{\rm o},i}$, for $i=2,3,\dots,$ where  $T_{{\rm o},k}$ and $T_{{\rm f},k}$ are the durations of $k$-th {\rm On} and {\rm Off} states, respectively.
Since these intervals are disjoint, the probability that next arrival occurs in {\rm On} state for given $(s_i,s_i^\prime)$'s is
\begin{align*}
    &{\rm P_{On}^{cond}}=\sum\nolimits_{i=1}^\infty F(s_i^\prime) -F(s_i),\\
    &\stackrel{(a)}{=}\sum\nolimits_{i=1}^\infty \exp(-\lambda s_i)-\exp(-\lambda (s_i + T_{{\rm o},i})),\\
    &=\sum\nolimits_{i=1}^\infty [1-\exp(-\lambda T_{{\rm o},i})]\prod\nolimits_{k=1}^{i-1}\exp(-\lambda (T_{{\rm o},k}+T_{{\rm f},k})),
\end{align*}
where $F(\cdot)$ is  the ${\rm CDF}$ of arrival time. Step (a) follows using $F(s)=1-\exp(-\lambda s)$. Now, deconditioning ${\rm P_{On}^{cond}}$ gives 
\begin{align*}
    {\rm P_{On}}&\stackrel{(a)}{=}\sum\nolimits_{i=1}^\infty \mathbb{E}_{T_{\rm o}}[1-e^{-\lambda T_{{\rm o}}}]\left(\mathbb{E}_{T_{\rm {o}}}\left[e^{-\lambda T_{{\rm o}}}\right]\mathbb{E}_{T_{\rm {f}}}\left[e^{-\lambda T_{{\rm f}}}\right]\right)^{i-1},\\
    &=\sum\nolimits_{l=0}^\infty \left[\frac{\kappa_{\rm o}}{\lambda+\kappa_{\rm o}}\frac{\kappa_{\rm f}}{\lambda+\kappa_{\rm f}}\right]^l\frac{\lambda}{\lambda+\kappa_{\rm o}},
\end{align*}
where Step (a) follows directly as both $T_{{\rm o},k}$s and $T_{{\rm f},k}$s are ${\rm i.i.d.}$ sequences. Finally,  by applying geometric series,  we obtain \eqref{eq:prob_arrive_in_on_state}. \vspace{-6mm}
\subsection{Proof of Lemma \ref{lemma:mean_Tk}}
\label{appendix:mean_Tk} \vspace{-2mm}
Let $Z$ denote the remaining service time of the update and $T_{\rm o}$ denote the duration of an {\rm On} state. 
Using the memoryless property of an exponential service, we can find the probability that the update  gets served in an {\rm On} state as 
\begin{align*}
    {\rm P_o}&=\mathbb{E}\left[\mathbb{P}[Z\leq t|T_{\rm o}=t]\right]=\frac{\mu}{\mu+\kappa_{\rm o}}.\numberthis\label{eq:prob_getting_servved_in_on_state}
\end{align*}
Let $Z^\prime$ denote the service time spent in an {\rm On} state in which the service gets completed. The distribution of $Z^\prime$ becomes
\begin{align*}
    f_{Z^\prime}(z)&=f_Z(z|Z<T_{\rm o})=\frac{\mathbb{P}(T_{\rm o}>z|Z=z)f_Z(z)}{\int_0^\infty \mathbb{P}(T_{\rm o}>Z|Z=z)f_Z(z){\rm d}z},\\
    &=(\mu+\kappa_{\rm o})\exp(-(\mu+\kappa_{\rm o})z).\numberthis\label{eq:service_time_in_ON_state}
\end{align*}
Let $T_{\rm o}^\prime$ denote the {\rm On} state duration in which the service does not get completed. We obtain  the distribution of $T_{\rm o}^\prime$ as
\begin{align*}
    f_{T_{\rm o}^\prime}(t)&=f_{T_{\rm o}}(t|Z>T_{\rm o})=(\mu+\kappa_{\rm o})\exp(-(\mu+\kappa_{\rm o})z).\numberthis\label{eq:ON_time_when_no_service}
\end{align*}
Let $G_o=0$ and $G_n=\sum_{l=1}^nT_{{\rm o},n}$. Consider $\mathcal{E}_n$ is an event where the service ends in the $n$-th {\rm On} state after its arrival. We can determine the probability of $\mathcal{E}_n$ as
\begin{align*}
    \mathbb{P}[\mathcal{E}_n]&= \mathbb{P}[G_n \geq Z>G_{n-1}],\\
    &\stackrel{(a)}{=}\mathbb{P}[G_n\geq Z]\mathbb{P}[Z>G_{l}]^{n-1}\stackrel{(b)}{=}{\rm P_o}\beta^{n-1},\numberthis\label{eq:prob_service_in_Nth_ON_state}
\end{align*}
where Step (a) follows from the memoryless  property of $Z$  and the fact that $T_{{\rm o},n}$'s are {\rm i.i.d.}s. Step (b) follows using \eqref{eq:prob_getting_servved_in_on_state} where $\beta=1-{\rm P_o}=\frac{\kappa_{\rm o}}{\mu+\kappa_{\rm o}}$.
Now, by the memoryless  property, we can directly consider independently distributed exponential service times spent in different {\rm On} states for deriving the expected service time by conditioning on the service times, {\rm On} state durations, and  $\mathcal{E}_n$.  Given the arrival in {\rm On} state, the mean of $T_k$ becomes ${\rm \bar{T}_k^{1,{\rm o}}}=$
\begin{align*}
    &\sum_{n=1}^\infty\left[\mathbb{E}[Z|Z<T_{{\rm o},n}]+\sum_{l=1}^{n-1}\mathbb{E}[T_{{\rm o},l}|Z>T_{{\rm o},l}]+\mathbb{E}[T_{{\rm f},l}]\right] \mathbb{P}[\mathcal{E}_n],\\
    &\stackrel{(a)}{=}\mathbb{E}[Z|Z<T_{{\rm o}}] + \sum_{n=1}^{\infty} (n-1)[\mathbb{E}[T_{{\rm o}}|Z>T_{{\rm o}}]+\mathbb{E}[T_{{\rm f}}]]{\rm P_o}\beta^{n-1},\\
    &\stackrel{(b)}{=}(\mu+\kappa_{\rm o})^{-1} + \left[(\mu+\kappa_{\rm o})^{-1}+\kappa_{\rm f}^{-1}\right]{\rm P_o}\mathcal{G}(n-1),\\
    &\stackrel{(c)}{=}\mu^{-1}+\kappa_{\rm o}(\mu\kappa_{\rm f})^{-1},\numberthis\label{eq:mean_Tk_condition_ON_state}
\end{align*}
  where $\mathcal{G}(f(n))=\sum_{n=1}^\infty f(n)\beta^{n-1}$,  Step (a) follows from \eqref{eq:prob_service_in_Nth_ON_state} and the fact that both $T_{{\rm o},l}$'s and $T_{{\rm f},l}$'s follow {\rm i.i.d.} distributions. Step (b) follows using \eqref{eq:service_time_in_ON_state}  and \eqref{eq:ON_time_when_no_service}. 
  Step (c) follows using $\mathcal{G}(n-1)=\beta(1-\beta)^{-2}$, $\beta=1-{\rm P_o}$, and some algebric simplifications.
 Similarly, we obtain the mean of $T_k$ conditioned on update arrived in an {\rm Off} state as ${\rm \bar{T}_k^{1,{\rm f}}}=$
\begin{align*}
    &\sum_{n=1}^\infty\left[\mathbb{E}[Z|Z<T_{{\rm o},n}]+(n-1)\mathbb{E}[T_{{\rm o}}|Z>T_{{\rm o}}]+n\mathbb{E}[T_{{\rm f}}]\right] \mathbb{P}[\mathcal{E}_n],\\
    &=\frac{1}{\mu+\kappa_{o}}+\frac{1}{\mu+\kappa_{\rm o}}{\rm P_o}\mathcal{G}(n-1)+\frac{1}{\kappa_{\rm f}}{\rm P_o} \mathcal{G}(n),\\
    &\stackrel{(a)}{=}\frac{1}{\mu+\kappa_{o}}+\frac{1}{\mu+\kappa_{\rm o}}\frac{\kappa_{\rm o}}{\mu}+\frac{1}{\kappa_{\rm f}}{\rm P_o}^{-1},\\
    &=\mu^{-1}+(\kappa_{\rm f}\mu)^{-1}(\mu+\kappa_{\rm o}),
\end{align*}
where Step (a) follows from the similar steps used in \eqref{eq:mean_Tk_condition_ON_state}. Thus, the mean of $T_k$  becomes ${\rm \bar{T}_k^1}={\rm \bar{T}_k^{1,o}}{\rm P_{On}}+{\rm \bar{T}_k^{1,f}}(1-{\rm P_{On}}).$

Next, substituting ${\rm P_{On}}$ from \eqref{eq:prob_arrive_in_on_state} and further solving gives \eqref{eq:mean_Tk}.
Now, we derive the second moment of $T_k$ using the similar approach presented above as
\begin{align*}
    {\rm \bar{T}_k^{2,o}}=\sum\nolimits_{n=1}^\infty\mathbb{E}\left[\left(Z^\prime+\sum\nolimits_{l=1}^{n-1}[T_{{\rm o},l}^\prime+T_{{\rm f},l}]\right)^2\right] \mathbb{P}[\mathcal{E}_n].
\end{align*}
Since $Z^\prime$ and $T_{{\rm o},l}^\prime$ are equal in distribution [see \eqref{eq:service_time_in_ON_state} and \eqref{eq:ON_time_when_no_service}], we can write
\begin{align*}
    {\rm \bar{T}_k^{2,o}}&=\sum\nolimits_{n=1}^\infty\mathbb{E}\left[\left(\sum\nolimits_{l=1}^{n}T_{{\rm o},l}^\prime+\sum\nolimits_{l=1}^{n-1}T_{{\rm f},l}\right)^2\right] \mathbb{P}[\mathcal{E}_n].
\end{align*}
Note that, since $T_{{\rm o},l}^\prime$ follows exponential distribution with parameter $(\mu+\kappa_{\rm o})$ independently, we have $S_{{\rm o},n}=\sum_{l=1}^nT_{{\rm o},l}^\prime\sim{\rm Gamma}(n,(\mu+\kappa_{\rm o})^{-1})$.  Similarly,   $W_{{\rm f},n-1}=\sum_{l=1}^{n-1}T_{{\rm f},l}\sim{\rm Gamma}(n-1,\kappa_{\rm f}^{-1})$. Using this, we can write ${\rm \bar{T}_k^{o,2}}$
\begin{align*}
    &=\sum\nolimits_{n=1}^\infty\left[\mathbb{E}[S_{{\rm o},n }^2]+\mathbb{E}[W_{{\rm f},n-1 }^2]+2\mathbb{E}[S_{{\rm o},n }]\mathbb{E}[W_{{\rm f},n-1 }]\right] \mathbb{P}[\mathcal{E}_n],\\
    &=\sum\nolimits_{n=1}^\infty\left[\frac{n(n+1)}{(\mu+\kappa_{\rm o})^2}+\frac{n(n-1)}{\kappa_{\rm f}^2}+2\frac{n}{\mu+\kappa_{\rm o}}\frac{n-1}{\kappa_{\rm f}}\right] \mathbb{P}[\mathcal{E}_n],\\
    &\stackrel{(a)}{=}\frac{{{\rm P_o} \mathcal{G}(n(n+1))}}{(\mu+\kappa_{\rm o})^2}+ \left[\frac{1}{\kappa_{\rm f}^2}+\frac{2}{\kappa_{\rm f}(\mu+\kappa_{\rm o})}\right]{\rm P_o}\mathcal{G}(n(n-1)),\\
    &={\rm P_o}\left(\frac{1}{\mu+\kappa_{\rm o}}+\frac{1}{\kappa_{\rm f}}\right)^2\mathcal{G}(n^2) \\
    &+ {\rm P_o}\left(\frac{1}{(\mu+\kappa_{\rm o})^2}-\frac{1}{\kappa_{\rm f}^2}-\frac{2}{\kappa_{\rm f}(\mu+\kappa_{\rm o})}\right)\mathcal{G}(n),\numberthis\label{eq:second_moment_Tk_condition_ON_state}
\end{align*}
where Step (a) follows using \eqref{eq:prob_service_in_Nth_ON_state}.
With some algebraic calculations, we obtain
$\mathcal{G}(n^2)=\mathcal{Z}P_o^{-1},$
where $\mathcal{Z}=1+3\frac{\kappa_{\rm o}}{\mu}+2\frac{\kappa_{\rm o}^2}{\mu^2}$. 
We also have $\mathcal{G}(n)={(1-\beta)^{-2}}={{\rm P_o}^{-2}}$.
Thus \eqref{eq:second_moment_Tk_condition_ON_state} becomes
\begin{align*}
    {\rm \bar{T}_k^{2,o}}&=\mathcal{Z}\left(\frac{1}{\mu+\kappa_{\rm o}}+\frac{1}{\kappa_{\rm f}}\right)^2+ \frac{\rm P_o}{\mu^2}-\frac{1}{{\rm P_o}\kappa_{\rm f}^2} -\frac{2}{\mu\kappa_{\rm f}}.\numberthis\label{eq:second_moment_Tk_condition_ON_state_1}
\end{align*}
Similarly, we determine the second moment of  $T_k$  conditioned on the service started in {\rm Off} state as
\begin{align*}
    &{\rm \bar{T}_k^{2,f}}\stackrel{(a)}{=}\sum\nolimits_{n=1}^\infty\mathbb{E}\left[\left(S_{{\rm o},n}+W_{{\rm f},n}]\right)^2\right] \mathbb{P}[\mathcal{E}_n],\\
    &=\sum\nolimits_{n=1}^\infty\left[\mathbb{E}[S_{{\rm o},n}^2]+\mathbb{E}[W_{{\rm f},n}]+2\mathbb{E}[S_{{\rm o},n}]\mathbb{E}[W_{{\rm f},n}]\right] \mathbb{P}[\mathcal{E}_n],\\
    &=\left[\frac{1}{(\mu+\kappa_{\rm o})^2}+\frac{1}{\kappa_{\rm f}^2}\right]{\rm P_o}\mathcal{G}(n(n+1))+\frac{2{\rm P_o}\mathcal{G}(n^2)}{\kappa_{\rm f}(\mu+\kappa_{\rm o})},\\
    &=\mathcal{Z}\left(\frac{1}{\mu+\kappa_{\rm o}}+\frac{1}{\kappa_{\rm f}}\right)^2 +\frac{\rm P_o}{\mu^2}+\frac{1}{{\rm P_o}\kappa_{\rm f}^2},\numberthis\label{eq:second_moment_Tk_condition_OFF_state_1}
\end{align*}
where Step (a) follows from $S_{{\rm o},n}=Z^\prime+S_{{\rm o},n-1}$.
Now, we can obtain the second moment of $T_k$ as $ {\rm \bar{T}_k^2}= {\rm \bar{T}_k^{2,o}}{\rm P_{On}}+{\rm \bar{T}_k^{2,f}}(1-{\rm P_{On}}).$
Finally, substituting \eqref{eq:second_moment_Tk_condition_ON_state_1} and   \eqref{eq:second_moment_Tk_condition_OFF_state_1} gives \eqref{eq:second_moment_Tk}.  \vspace{-1mm}
\subsection{Proof of Lemma \ref{lemma:prob_arrive_in_on_state_with_preemption}} \vspace{-1mm}
\label{appendix:prob_arrive_in_on_state_with_preemption}
The update arrived in an {\rm On} state will be delivered if it does not get preempted during the {\rm Off} states occurring in its service time. 
Thus, the probability that the update arrived in {\rm On} state will be delivered is 
\begin{equation*}
    {\rm P^\star_{\rm On}}={\rm P_{On}}{\rm P_{pre}^c},
    \numberthis\label{eq:prob_arrive_in_on_state_with_preemption_1}
\end{equation*}
where ${\rm P_{On}}$ is given in Lemma \ref{lemma:prob_arrive_in_on_state} and ${\rm P_{pre}^c}$ is the probability that the update will not get preempted.   By conditioning on the event $\mathcal{E}_n$ (defined in Appendix \ref{appendix:mean_Tk}), we  obtain 
\begin{align*}
    {\rm P_{pre}^c}&=\sum\nolimits_{n=1}^{\infty}\left[\prod\nolimits_{l=1}^{n-1}\mathbb{P}[{\rm No~arrival~in~} T_{{\rm f},l}]\right]\mathbb{P}[\mathcal{E}_n],\\
    &\stackrel{(a)}{=}{\rm P_o}\sum\nolimits_{n=1}^\infty \alpha^{n-1}\beta^{n-1}=(1-\beta)\frac{1}{1-\alpha\beta},
\end{align*}
where Step (a) follows using \eqref{eq:prob_service_in_Nth_ON_state},  $ \mathbb{P}[{\rm No~arrival~in~} T_{{\rm f},l}]=\alpha=\frac{\kappa_{\rm f}}{\lambda+\kappa_{\rm f}}$ and independence of $T_{{\rm f},l}$'s. Finally, substituting the above ${\rm P_{pre}^c}$ and  ${\rm P_{On}}$ from Lemma \ref{lemma:prob_arrive_in_on_state} in \eqref{eq:prob_arrive_in_on_state_with_preemption_1} completes the proof.  \vspace{-2mm}
\subsection{Proof of Lemma \ref{lemma:mean_Tk_with_preemption}} \vspace{-2mm}
\label{appendix:mean_Tk_with_preemption}
 We construct the proof on the similar lines as in Appendix \ref{appendix:mean_Tk}.
 Recall, in Appendix \ref{appendix:mean_Tk}, the probability $\mathbb{P}[\mathcal{E}_n]$  that the service of an update ends in the $n$-th {\rm On} state after its arrival is derived for the case of no preemption. However, under preemption, this probability depends on the probability that service takes $n$ {\rm On} states and there is no preemption during the {\rm Off} periods occurring during the service time. 
   Let $\mathcal{E}_n^\star$ denote the event where service of an update ends in the $n$-th {\rm On} state after its arrival under preemption.
    Note that $\mathcal{E}_n^\star$ includes $n$ {\rm i.i.d.} {\rm Off} periods in the service if it starts from an {\rm Off}  state, otherwise it includes $n-1$ {\rm i.i.d.} {\rm Off} periods.
    Thus, 
\begin{align*}
    \mathbb{P}[\mathcal{E}_n^\star]=\begin{cases}\frac{\mathbb{P}[\mathcal{E}_n]{\rm P_{n-1}^{no-pre}}}{\sum_{n=1}^\infty \mathbb{P}[\mathcal{E}_n]{\rm P_{n-1}^{no-pre}}}, &~\text{if ser. starts in {\rm On} st.}, \\
    \frac{ \mathbb{P}[\mathcal{E}_n]{\rm P_n^{no-pre}}}{\sum_{n=1}^\infty \mathbb{P}[\mathcal{E}_n]{\rm P_n^{no-pre}}}, &~\text{otherwise}, 
    \end{cases}
    \numberthis\label{eq:prob_en_star}
\end{align*}
 where ${\rm P_n^{no-pre}}$ is the probability that there is no preemption in the $n$ {\rm Off} states occurring during the service time. 
 We can directly evaluate ${\rm P_n^{no-pre}}$ as ${\rm P_n^{no-pre}}=\alpha^n$,
 where $\alpha$ is the probability that there is no preemption in a typical {\rm Off} state which is obtained in Appendix \ref{appendix:prob_arrive_in_on_state_with_preemption} as $\alpha=\frac{\kappa_{\rm f}}{\lambda+\kappa_{\rm f}}$.
 Thus, substituting $\mathbb{P}[\mathcal{E}_n]={\rm P_o\beta^{n-1}}$  and ${\rm P_{no-pre}}=\alpha^n$ in \eqref{eq:prob_en_star} and simplifying, we get
 \begin{equation}
     \mathbb{P}[\mathcal{E}_n^\star]=(1-\gamma)\gamma^{n-1},
     \label{eq:prob_service_in_Nth_ON_state_with_preemption}
 \end{equation}
 where $\gamma=\alpha\beta$,
 irrespective of whether the services started in the {\rm On} or {\rm Off} state.
 Note that the distribution of {\rm Off} state duration conditioned on no preemption is 
 \begin{align*}
     f_{T_{\rm f}^\prime}(t)=(\lambda+\kappa_{\rm f})\exp(-(\lambda+\kappa_{\rm f})t).
 \end{align*}

Now, similar to \eqref{eq:mean_Tk_condition_ON_state},  we obtain the mean of $T_k^\star$ given the update arrived in the {\rm On} state as
\begin{align*}
    &{\rm \bar{T}_k^{1\star,o}}=\sum\nolimits_{n=1}^\infty\left[\mathbb{E}[Z']+(n-1)\mathbb{E}[T_{{\rm o}}^\prime]+(n-1)\mathbb{E}[T_{{\rm f}}^\prime]\right] \mathbb{P}[\mathcal{E}_n^\star],\\
    &\stackrel{(a)}{=}\mathbb{E}[Z^\prime] +[\mathbb{E}[T_{{\rm o}}^\prime]+\mathbb{E}[T_{{\rm f}}^\prime]] (1-\gamma)\gamma\sum\nolimits_{n=1}^{\infty} (n-1)\gamma^{n-2},\\
    &=\frac{1}{\mu+\kappa_{\rm o}} + \left[\frac{1}{\mu+\kappa_{\rm o}}+\frac{1}{\lambda+\kappa_{\rm f}}\right](1-\gamma)\gamma\frac{\rm d}{\rm d \gamma}\sum\nolimits_{l=0}^\infty \gamma^l,\\
    &= \left[(\mu+\kappa_{\rm o})^{-1} +\gamma(\lambda+\kappa_{\rm f})^{-1}\right](1-\gamma)^{-1},\numberthis\label{eq:mean_Tk_condition_ON_state_with_preemption}
\end{align*}
where Step (a) follows using \eqref{eq:prob_service_in_Nth_ON_state_with_preemption}.
Similarly, we obtain the mean of $T_k^\star$ given the update arrived in the {\rm Off} state as
\begin{align*}
    &{\rm \bar{T}_k^{1\star,f}}=\sum\nolimits_{n=1}^\infty\left[\mathbb{E}[Z']+(n-1)\mathbb{E}[T_{{\rm o}}^\prime]+n\mathbb{E}[T_{{\rm f}}^\prime]\right] \mathbb{P}[\mathcal{E}_n^\star],\\ 
    &=\mathbb{E}[Z^\prime] +\mathbb{E}[T_{{\rm o}}^\prime] (1-\gamma)\gamma\sum\limits_{n=1}^{\infty} (n-1)\gamma^{n-2}+\mathbb{E}[T_{{\rm f}}^\prime] \sum\limits_{n=1}^{\infty} n\gamma^{n-1},\\
    &= \left[\frac{1}{\mu+\kappa_{\rm o}}+\frac{1}{\lambda+\kappa_{\rm f}}\right]\frac{1}{1-\gamma}.\numberthis\label{eq:mean_Tk_condition_OFF_state_with_preemption}
\end{align*}
Next, we obtain mean $\bar{\rm T}_k^\star$ as
\begin{equation*}
    {\rm \bar{T}_k^{1\star}}={\rm \bar{T}_k^{1\star,o}}{\rm P_{On}^\star}+{\rm \bar{T}_k^{1\star,f}}(1-{\rm P_{On}^\star}).
\end{equation*}
Finally, by substituting  ${\rm \bar{T}_k^{1*,o}}$ from \eqref{eq:mean_Tk_condition_ON_state_with_preemption}, ${\rm \bar{T}_k^{1*,f}}$ from \eqref{eq:mean_Tk_condition_OFF_state_with_preemption}, and ${\rm P_{On}^\star}$ from  Lemma \ref{lemma:prob_arrive_in_on_state_with_preemption} in the above expression and further simplifying, we obtain \eqref{eq:mean_Tk_with_preemption}. This completes the proof. \vspace{-2mm}

\end{document}